\newtheorem{theorem}{Theorem}
\newtheorem{lemma}[theorem]{Lemma}
\newtheorem{corollary}[theorem]{Corollary}
\newcommand{\ket}[1]{\mbox{$\lvert #1 \rangle$}}
\def\ZZ{\mathbbm{Z}}
\def\RR{\mathbbm{R}}
\def\CC{\mathbbm{C}}
\def\FF{\mathbbm{F}}
\def\Id{\mathbbm{1}}
\DeclareMathOperator{\tr}{tr}
\newcommand{\abs}[1]{\lvert#1\rvert}
\begin{document}

\title{Stabilizer information inequalities from phase space distributions}

\author{David Gross}
\affiliation{
	Institute for Physics, University of Freiburg, Rheinstrasse 10,
	79104 Freiburg, Germany
}
\email{www.qc.uni-freiburg.de}

\author{Michael Walter}
\affiliation{
  Institute for Theoretical Physics, ETH Zurich, Wolfgang--Pauli--Str.~27, 8093 Zurich, Switzerland
}
\email{mwalter@itp.phys.ethz.ch}

\date{February 27, 2013; revised: June 13, 2013}

\begin{abstract}
	The Shannon entropy of a collection of random variables is
	subject to a number of constraints, the best-known examples being
	monotonicity and strong subadditivity.  It remains an open question
	to decide which of these ``laws of information theory'' are also
	respected by the von Neumann entropy of many-body quantum states.
	In this article, we consider a toy version of this difficult problem by
	analyzing the von Neumann entropy of stabilizer states.  We find
	that the von Neumann entropy of stabilizer states satisfies all
	\emph{balanced} information inequalities that hold in the classical
	case. Our argument is built on the fact that stabilizer states have a
	classical model, provided by the discrete Wigner function: The phase-space
	entropy of the Wigner function corresponds directly to the von
	Neumann entropy of the state, which allows us to reduce to the
	classical case.  Our result has a natural counterpart for multi-mode
	Gaussian states, which sheds some light on the general properties of
	the construction. We also discuss the relation of our results to
	recent work by Linden, Ruskai, and Winter \cite{linden_quantum_2013}.
\end{abstract}


\maketitle

\section{Introduction and Results}

The Shannon entropy of a discrete random variable $X$ is given by $H(X) = -\sum_x p_x \log p_x$, where $p_x$ is the probability that $X = x$.
Given a collection of random variables $X_1, \ldots, X_n$, we can consider the joint entropy $H(X_I)$ of any non-empty subset $X_I = (X_i)_{i \in I}$ of the variables.
These entropies are not independent---they are subject to a number of linear homogeneous inequalities, known as \emph{information inequalities}, or as the ``laws of information theory'' \cite{pippenger86}.
Conversely, the set of all such inequalities determines the set of
possible joint entropies $(H(X_I))$ up to closure \cite{yeung97}.
There are two classes of \emph{fundamental} or \emph{Shannon-type}
inequalities.
The first is \emph{monotonicity},
stating that the entropy does not decrease if more
random variables are taken into account:
$H(X_{I\cup J}) - H(X_I) \geq 0$.
The second class is
\emph{strong subadditivity},
\begin{equation*}
	H(X_{I}) + H(X_J) - H(X_{I\cap J}) - H(X_{I \cup J}) \geq 0.
\end{equation*}
Since the seminal work of Zhang and Yeung it is known that there are other entropy inequalities which are not implied by those of Shannon type \cite{zhangyeung98}.
In fact, there are infinitely many independent such inequalities \cite{matus07}.

In quantum mechanics, the state of a quantum state of $n$ particles is described by a density operator $\rho$ on a tensor-product Hilbert space.
The state of any subset $I \subseteq \{1,\ldots,n\}$ of the particles is described by the reduced state $\rho_I = \tr_{I^c} \rho$ formed by tracing out the Hilbert space of the other particles. The natural analogue of the Shannon entropy is the von Neumann entropy $S(\rho) = -\tr \rho \log \rho$ \cite{nielsenchuang04}, and it is of fundamental interest to determine the linear inequalities satisfied by the entropies $S(\rho_I)$ of subsystems \cite{pippenger03}.
The most immediate difference to the classical case is that the von Neumann entropy is no longer monotonic:
global quantum states can exhibit less entropy than their reductions (a signature of entanglement).
Instead, the von Neumann entropy satisfies \emph{weak monotonicity}:
\begin{equation*}
  S(\rho_{I \cup K}) + S(\rho_{J \cup K}) - S(\rho_I) - S(\rho_J) \geq 0.
\end{equation*}
Strong subadditivity, however, famously remains valid for quantum
entropies \cite{liebruskai73}. It is a major open problem in quantum
information theory to decide whether there are any entropy
inequalities beyond the ones given above (see Refs.~\onlinecite{lindenwinter05,cadneylindenwinter12} for some partial progress, including a class of
so-called \emph{constrained inequalities}).

Strong subadditivity is tight for product
states (resp.~for independent random variables).
An entropy inequality $\sum_I \nu_I S(\rho_I) \geq 0$ has that
property if and only if $\sum_{I\ni i} \nu_I = 0$ for all $i$. Such
an inequality is called \emph{balanced} (also \emph{correlative} \cite{han75}
or \emph{homogeneous} \cite{linden_quantum_2013}), and it has been shown that
the classical entropy cone is determined by the set of balanced information inequalities
together with monotonicity (which is not balanced) \cite{chan03}. Balanced entropy
inequalities will play an important role below.

Of course, entropy inequalities are also relevant in the case of continuous variables (also classically, see e.g.\ Ref.~\onlinecite{chan03}) as well as for other kinds of entropies, e.g.\ R\'{e}nyi entropies \cite{lindenmosonyiwinter12}.

\bigskip

In this work we study the entropy inequalities satisfied by two classes of quantum states---namely, \emph{stabilizer states} and \emph{Gaussian states} (which are the continuous-variable counterpart of the former).
These states are versatile enough to exhibit intrinsically quantum features (such as multi-particle entanglement), but possess enough structure to allow for a concise and computationally efficient description. In both cases, \emph{quantum phase-space methods} have been built around them, and it is this point of view we aim to exploit here.
For $n$ systems of dimension $d$, the phase space is $\ZZ_d^{2n}$, while for $n$ bosonic modes
it is given by $\RR^{2n}$. In both cases, it is the direct sum of the single-particle/single-mode phase spaces.

The starting point for our work is the Wigner function, which for Gaussian states as well as for stabilizer states in odd dimensions $d$ is a bona fide probability distribution on the classical phase space (the case of even $d$ requires some more care, see Theorem~\ref{stabilizer theorem}).
We may thus define random variables $X_1,\ldots,X_n$ on the phase space, jointly distributed according to the Wigner function of the given quantum state $\rho$. Here, $X_i$ denotes the component in the single-particle space of the $i$-th particle or mode.
The random variables $X_1, \dots, X_n$ constitute our \emph{classical model}.
This construction is compatible with reduction: the
marginal probability distribution of a subset of variables $X_I = (X_i)_{i \in I}$ is
given precisely by the Wigner function of the reduced quantum state
$\rho_I$.  Our crucial observation then is that certain quantum entropies
are simple functions of the corresponding
classical entropies. More precisely, we find that
\begin{equation}
\label{general result}
	S_2(\rho_I) = H_2(X_I) - C \abs I,
\end{equation}
where $C > 0$ is a universal constant and where $S_2(\rho) = -\log \tr
\rho^2$ and $H_2(X) =
-\log \sum_x p_x^2$ denote the quantum and classical R\'{e}nyi-2
entropy, respectively. In the case of continuous variables, we use the
differential R\'{e}nyi entropy $H_2(X) = -\log \int p_x^2 dx$.
Therefore, if $\sum_I \nu_I H_2(X_I) \geq 0$ is a \emph{balanced}
entropy inequality satisfied by the random variables $X_I$
then the same inequality is satisfied by the quantum state,
\begin{equation*}
	\sum_I \nu_I S_2(\rho_I) = \sum_I \nu_I H_2(X_I) - C \underbrace{\sum_I \nu_I \abs I}_{=0} \geq 0.
\end{equation*}

In the case of stabilizer states (Section~\ref{stabilizer states}), all reduced states $\rho_I$ are normalized projectors (onto the corresponding code subspace), while the $X_i$ are uniformly distributed (on their support).
Thus all R\'{e}nyi entropies agree with each other, and also with the Shannon and von Neumann entropy, respectively:
\begin{equation}
	\label{stab entropies}
	S(\rho_I) = H(X_I) - \abs I.
\end{equation}
As above, it follows that any balanced entropy inequality that is
valid for the Shannon entropies of the $X_I$ is also valid for the
von Neumann entropies of the stabilizer states $\rho_I$.  In
particular, \emph{stabilizer states respect all balanced information
inequalities}, such as the inequalities of non-Shannon type found in
Ref.~\onlinecite{zhangyeung98}.
What is more, our construction can also be understood in the group-theoretical
framework of Ref.~\onlinecite{chanyeung02}.
Here it is well-known that there are inequalities which do not hold for arbitrary
random variables, but only for random variables constructed from certain classes of
subgroups, e.g.~normal subgroups \cite{lichong07}.
Since phase spaces are Abelian groups, it follows that the von Neumann entropies
of stabilizer states also respect such information laws, e.g.\
the \emph{Ingleton inequality} \cite{lichong07}, which is the balanced inequality
\begin{equation}
	\label{ingleton}
	I_\rho(I:J|K) + I_\rho(I:J|L) + I_\rho(K:L) - I_\rho(I:J) \geq 0.
\end{equation}
Here, $I_\rho(I:J) = S(\rho_I) + S(\rho_J) - S(\rho_{I \cup J})$ and $I_\rho(I:J|K) = S(\rho_{I \cup K}) + S(\rho_{J \cup K}) - S(\rho_K) - S(\rho_{IJK})$ are the quantum (conditional) mutual information.

We find it instructive to understand how the above classical model
manages to respect monotonicity, while the quantum state may violate it.
For example, since stabilizer states can be entangled (even maximally
so),
$H(\rho_1) = H(\rho_2) = 1$ and	$H(\rho_{12})=0$ are perfectly valid entropies
of a stabilizer state which obviously violate monotonicity.
Equation~\eqref{stab entropies} states that the classical
model is \emph{more highly mixed} than the quantum one, in the sense
that the entropy associated with a subset $I$ is higher by an amount
of $\abs I$. 
That is precisely the amount by which quantum mechanics can violate
monotonicity.

\medskip

In the case of Gaussian states (Section~\ref{gaussian states}), the random variables $X_1,\ldots,X_n$ have a multivariate normal distribution, and we show that the differential R\'{e}nyi-2 entropy in \eqref{general result} can be replaced by the R\'{e}nyi-$\alpha$ entropy for arbitrary positive $\alpha \neq 1$:
\begin{equation*}
	\label{gaussian entropies}
	S_2(\rho_I) = H_\alpha(X_I) - \abs I \left( \log \pi -  \frac {\log \alpha} {1-\alpha} \right),
\end{equation*}
In the limiting case $\alpha \rightarrow 1$, we recover a formula involving the differential Shannon entropy
which has previously appeared in Ref.~\onlinecite{adessogirolamiserafini12}, attributed to Stratonovich.
Thus, \emph{R\'{e}nyi-2 entropies of Gaussian states respect all
balanced information inequalities} that hold for multivariate normal
distributions \cite{holtzsturmfels07,shadbakhthassibi11}.
Interestingly, it is not clear whether a similar statement holds for
the von Neumann entropy of the Gaussian state. This is perhaps an
indication that the connection \eqref{stab entropies} between the Shannon and the von Neumann
entropy for stabilizer states is somewhat
coincidental. The comparison with Gaussian states suggests that the
R\'enyi-2 entropies might be the more fundamental quantities in this
context, that merely happen to agree with the von Neumann entropy in the case of stabilizer states.

\medskip

We conclude this section with a few remarks.
Our work uses the classical model provided by the Wigner function as a
tool for proving statements that do not, a priori, seem to be connected
to phase space distributions. This point of view has been employed
before, e.g.\ to construct quantum expanders \cite{grosseisert08},
to establish simulation algorithms \cite{veitch12, mari12, veitch13}, and
for demonstrating the onset of contextuality \cite{howard13}.  It
would be interesting to see further applications.

While it is known that the Wigner function approach cannot be
straight-forwardly translated to non-stabilizer states
\cite{hudson74,gross06,gross07}, our discussion suggests searching for other
maps from quantum states to probability distributions that reproduce
entropies faithfully, up to state-independent additive constants.

In order to establish the Ingleton inequality \eqref{ingleton}, we have used
the group-theoretical approach to classical information inequalities \cite{chanyeung02}.
It would be highly desirable to find a quantum-mechanical analogue of this work
(see Refs.~\onlinecite{christandlmitchison06,christandlsahinogluwalter12} for partial results towards this goal,
motivated by the quantum marginal problem of quantum physics).

\medskip

{\bf Related Work.} Independently of this work, Linden, Ruskai, and Winter have published an analysis of the entropy cone generated by stabilizer states \cite{linden_quantum_2013}.  Their methods -- focusing on group-theoretical constructions -- are conceptually complementary to our
phase-space approach.
Ref.~\onlinecite{linden_quantum_2013} contains a complete characterization
of the entropy cone generated by four-party stabilizer states.
The paper also
lists further example of inequalities which, like the Ingleton
Inequality, are respected by stabilizer states, even though there are
classical distributions violating it.
While not stated explicitely, their methods can readily be used to
prove that all balanced inequalities remain valid for stabilizers
(see Theorem 11 in
Ref.~\onlinecite{linden_quantum_2013}
and discussion thereafter).

{\bf Convention.} In this work, entropies of $d$-level systems are measured
in units of $\log d$ bits. In the continuous-variable case, we employ the natural
logarithm.

\section{Stabilizer States}
\label{stabilizer states}

In this section, we describe our results on stabilizer states.  We
start by fixing some notation and recalling the basic formalism of
stabilizer states \cite{gottesman96,nielsenchuang04}.  The phase-space
methods that we employ work most smoothly over Hilbert spaces of odd
dimension and therefore the exposition is focused on that case.
However, discrete phase spaces and stabilizer states can be defined for
any dimension $d$ and our main result is valid in full generality.
Theorem~\ref{stabilizers} summarizes the precise statements that we require to prove our results, and we present a self-contained account of the general theory in Appendix~\ref{appendix}.
We then establish our main result -- a classical model for the von Neumann
entropy of stabilizer states -- in Theorem~\ref{stabilizer theorem}.

Let $d>1$ be an odd integer. The discrete \emph{configuration space} of a single particle is
$\ZZ_d$, where $\ZZ_d=\ZZ/d \ZZ$ is the set of congruence classes modulo
$d$.
The associated \emph{phase space} is the $\ZZ_d$-module $\ZZ_d^2 = \ZZ_d \oplus \ZZ_d$.
We denote the components of ``vectors'' $v \in \ZZ_d^2$ by $(p,q)$ in order to emphasize the analogy with momentum
and position in the continuous-variable theory. A crucial piece of
structure is the \emph{symplectic form} defined on the phase space.
It maps vectors $v=(p,q)$ and $v'=(p',q')$ to $[v, v'] = p q' - q p'$.
For each vector $v=(p,q)$, we define a \emph{Weyl operator} acting on
the Hilbert space of complex functions on $\ZZ_d$, which can be identified with $\CC^d$:
It is given by $(w(p,q) \psi)(x) = e^{\frac{2\pi\mathrm{i} }{d}(px-2^{-1} pq)} \psi(x-q)$, where $2^{-1} := (d^2+1)/2$ denotes a multiplicative inverse of $2$ modulo $d$ (this only exists for odd $d$).
A direct calculation shows that
\begin{equation}\label{eqn:heisenberglaw}
	w(v) w(v')
	= e^{\frac{2 \pi \mathrm{i}}{d} 2^{-1} [v,v']} w(v+v').
\end{equation}
Thus the Weyl operators realize a \emph{projective} or
\emph{twisted} representation of the additive group of the phase
space $\ZZ_d^2$ (it is a faithful representation of the \emph{Heisenberg group}
over $\ZZ_d$, see e.g.\ Ref.~\onlinecite{folland89}).

For $n$ particles, the phase space is the direct sum
$V=\bigoplus_{i=1}^n V_i=\ZZ^{2n}_d$
of the single-particle phase spaces $V_i = \ZZ^2_d$. It can be represented on $(\CC^d)^{\otimes n}$ by the tensor product of the single-particle representations, $w(v) = \bigotimes_{i=1}^n w(v_i)$, and the composition law \eqref{eqn:heisenberglaw} continues to hold if we extend the symplectic form linearly.

Let us now consider an \emph{isotropic submodule} $M \subseteq V$,
i.e.\ a submodule on which the symplectic form vanishes.
Isotropicity implies by \eqref{eqn:heisenberglaw} that the
Weyl operators $\{w(m) : m\in M\}$ form an Abelian group---a \emph{stabilizer group}.
One easily verifies that
\begin{equation}\label{eqn:codes}
	P(M):= \frac1{\lvert{}M\rvert}
	\sum_{m\in M} w(m)
\end{equation}
defines an orthogonal projection onto a $d^n / \abs M$-dimensional subspace of $(\CC^d)^{\otimes n}$.
This subspace is called the \emph{stabilizer code} associated with $M$; it contains the vectors in $(\CC^d)^{\otimes n}$ that are invariant under the stabilizer group.
The corresponding \emph{stabilizer state} is the density operator
\begin{equation}
	\label{stabilizer state}
	\rho(M)
	= \frac 1 {\tr P(M)} P(M)
	= \frac 1 {d^n} \sum_{m \in M} w(m),
\end{equation}
and its von Neumann entropy is given by
\begin{equation}
\label{odd entropy}
	S(\rho(M)) = n - \log\ \abs M.
\end{equation}

One obtains a larger set of stabilizer codes by including certain phase factors in the sum in \eqref{eqn:codes} \cite{gottesman96,nielsenchuang04,gross06}. However, all stabilizer codes are locally equivalent to one of the form \eqref{eqn:codes}. Since we are only interested in the entropy of $\rho(M)$ and of its reduced density matrices, we may therefore safely restrict to the class of stabilizer states defined above.

One obtains a simple expression for the reduced state $\rho(M)_I$ by using the relation $\tr w(v_i) = d \, \delta_{v_i,0}$. For this, let $V_I := \{ v \in V : v_i = 0 \text{ for } i \notin I\}$ be the phase space of a subset of particles $I \subseteq \{1,\ldots,n\}$, and set $M_I := M \cap V_I$. Then $\rho(M)_I = \rho(M_I)$, i.e.\ the reduced state is the stabilizer state described by the isotropic submodule $M_I\subseteq V_I$.
From \eqref{odd entropy} we find that
\begin{equation*}
	S(\rho(M)_I) = S(\rho(M_I)) = \abs I - \log\ \abs{M_I}.
\end{equation*}

The following theorem summarizes the aspects of the phase-space picture of
stabilizer states that we will use below to establish our main result.
It is stated in such a way as to also apply to the even-dimensional case.
Note that the above discussion essentially proves Theorem~\ref{stabilizers} for odd $d$.
We give a general proof in Appendix~\ref{appendix}.

\newcommand{\stabilizertheorem}[4]{
\begin{#1}[Stabilizers in phase space]
  #2
	Let $V=\bigoplus_{i=1}^n V_i=\ZZ^{2n}_d$ be the phase space for $n$ particles with
	local dimension $d$, where $d > 1$ is an arbitrary integer.
	There is a one-to-one correspondence between isotropic submodules $M \subseteq V$ and equivalence classes $[\rho(M)]$ of stabilizer states on $(\CC^d)^{\otimes n}$ under conjugation with Weyl operators. Moreover,
	\begin{align}
		[\rho(M)_I] &= [\rho(M_I)],
		#3 \\
		S([\rho(M)_I]) &= \abs I - \log\ \abs{M_I}.
		#4
	\end{align}
	If $d$ is odd then there is a canonical element $\rho(M)$ in each equivalence class, given by \eqref{stabilizer state}. It is compatible with reductions, i.e.\ $\rho(M)_I = \rho(M_I)$.
\end{#1}
}
\stabilizertheorem{theorem}{\label{stabilizers}}{\label{reductions stabilizers}}{\label{von neumann entropy stabilizers}}

If the local dimension $d$ is odd, there exists a discrete Wigner function that replicates many properties of its better-known continuous-variable variant \cite{gross06}.
It is the function on phase space defined by
\begin{eqnarray*}
	W_\rho(v)
	&=&
	\frac 1 {d^{2n}} \sum_{v'\in V} e^{-\frac{2\pi\mathrm{i}}{d} 2^{-1} [v,v']} \tr\big(w(v')^\dagger
	\rho\big).
\end{eqnarray*}
The central observation is that in the case of stabilizer states, the Wigner
function $W_{\rho(M)}$ is 
a \emph{probability distribution on phase space},
i.e.\ it attains only non-negative values and their sum is one.
In fact \cite{gross05,gross06},
\begin{equation}
\label{eqn:phasespacedist}
\begin{aligned}
	W_{\rho(M)}(v)
	=&\frac 1 {d^{2n}} \sum_{v'\in V} e^{-\frac{2\pi\mathrm{i}}{d} 2^{-1} [v,v']} \delta_M(v') \\
	=&\frac 1 {d^{2n}} \sum_{v'\in M} e^{-\frac{2\pi\mathrm{i}}{d} 2^{-1} [v,v']} \\
	=&\frac {\abs M} {d^{2n}} \delta_{M^\perp}(v) \\
	=&\frac 1 {\abs{M^\perp}} \delta_{M^\perp}(v),
\end{aligned}
\end{equation}
where we have defined the \emph{symplectic complement} of $M \subseteq V$
by $M^\perp = \{ v \in V : [v,m]=0\quad \forall m \in M \}$, for which $\abs M \abs{M^\perp} = \abs V = d^{2n}$ and $(M^\perp)^\perp = M$.
Thus the Wigner function of the stabilizer state with isotropic submodule $M \subseteq V$ is given by the uniform distribution on $M^\perp \subseteq V$.

We now show that this construction defines a classical model which reproduces the entropies of the given stabilizer state and its reduced states, up to a certain constant. By phrasing the construction solely in terms of the symplectic complement (hence without recourse to the Wigner function), this result can be established for arbitrary local dimension, even or odd:



\begin{theorem}[Classical model for stabilizer states]
	\label{stabilizer theorem}
	Let $V = \bigoplus_{i=1}^n V_i = \ZZ^{2n}_d$ be the	phase space
	for $n$ particles with local dimension $d$, where $d > 1$ is an arbitrary integer.
	Let $\rho$ be a stabilizer state with isotropic submodule $M \subseteq V$,
	and define a random variable $X = (X_1, \ldots, X_n)$ that takes values uniformly in the symplectic
	complement $M^\perp \subseteq V$. Then,
	\begin{equation}
		\label{main eqn}
		S(\rho_I) = H(X_I) - \abs I,
	\end{equation}
	and the same conclusion holds if we replace the Shannon and von Neumann entropy
	by any R\'{e}nyi entropy.

	If $d$ is odd then the above construction can also be obtained by
	interpreting the Wigner function $W_{\rho}$ as the probability
	distribution of the random variable $X$.
\end{theorem}

\begin{proof}
	To prove \eqref{main eqn}, denote by $\pi_I \colon V \rightarrow
	V_I$ the projection onto the phase space of parties $I \subseteq
	\{1,\ldots,n\}$.  It will be convenient to consider $V_I$ as a submodule of $V$ in the natural way.
	To avoid any notational ambiguity, we denote by $X^{\perp_I}$ the symplectic complement of a subspace $X$	taken within $V_I$.

	Observe that
	\begin{equation}
	\label{perp lemma a}
		\pi_I(M^\perp) \subseteq M_I^{\perp_I}.
	\end{equation}
	Indeed, if $v\in M^\perp$ and $m_I \in M_I$, then
	$[\pi_I(v),m_I]=[v,m_I]=0$.
	On the other hand, we find that
	\begin{equation}
	\label{perp lemma b}
		\pi_I(M^\perp)^{\perp_I}
		\subseteq
		M_I.
	\end{equation}
	To see this, consider a vector $v_I \in V_I$ and note that
	if $v_I \perp \pi_I(M^\perp)$ then $v_I \perp M^\perp$, hence $v_I \in M
	\cap V_I = M_I$ since $(M^\perp)^\perp = M$.
	We conclude from \eqref{perp lemma a} and \eqref{perp lemma b} that
	\begin{equation}
	\label{perp lemma}
		\pi_I(M^\perp) = M_I^{\perp_I}.
	\end{equation}

	Note that
	$X_I = \pi_I(X)$. Since $\pi_I$ is a group homomorphism, it follows
	that $X_I$ is distributed uniformly on its range, so that
	\begin{align*}
		H(X_I)
		&= \log\ \abs{\pi_I(M^\perp)}
		= \log\ \abs{M_I^{\perp_I}}
		= \log \frac {d^{2 \abs I}} {\abs {M_I}} \\
		&= 2 \abs I - \log\ \abs{M_I}
		= \abs I + S(\rho_I),
	\end{align*}
	where we have used \eqref{von neumann entropy stabilizers} in the last step.
	We have thus established \eqref{main eqn}.

	The same result holds if we replace the Shannon and von Neumann
	entropy by R\'{e}nyi entropies. This is because the stabilizer
	states $\rho(M)_I$ are normalized projectors and each random
	variable $X_I$ is distributed uniformly on its range, so that the
	entropies coincide.

	Finally, it is clear from \eqref{eqn:phasespacedist} that for odd
	$d$ the distribution of $X$	coincides with the Wigner function
	$W_{\rho(M)}$ of the stabilizer state.
	It remains to show that the Wigner function $W_{\rho_I}$ of a
	reduced state $\rho_I$ is obtained by marginalizing the full Wigner
	function
	(in other words: the
	quantum and the classical way of reducing to subsystems commute):
	\begin{equation}\label{eqn:margins} W_{\rho_I}(v) = \sum_{w\, :
		\,w_I = v} W_\rho(w)
	\end{equation}
	for all $v \in V_I$. While this can easily be proved in full
	generality from the definition of the Wigner function,
	it is also true that for the special case of stabilizer states,
	Eq.~\eqref{eqn:margins} follows directly from \eqref{perp lemma}.
\end{proof}

\begin{corollary}
\label{stabilizer corollary}
  Stabilizer states satisfy all balanced information inequalities.
  Moreover, they satisfy the Ingleton inequality \eqref{ingleton}.
\end{corollary}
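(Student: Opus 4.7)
The plan is to bootstrap both statements from the identity $S(\rho(M)_I) = H(X_I) - \abs I$ established in the theorem, where $X$ is uniformly distributed on the symplectic complement $M^\perp \subseteq V$. The key observation is that for any balanced coefficient family $(\nu_I)$, i.e.\ one satisfying $\sum_{I \ni i} \nu_I = 0$ for every party $i$, the offset $\abs I$ drops out:
\begin{equation*}
\sum_I \nu_I S(\rho(M)_I) = \sum_I \nu_I H(X_I) - \sum_I \nu_I \abs I = \sum_I \nu_I H(X_I),
\end{equation*}
because $\sum_I \nu_I \abs I = \sum_i \sum_{I \ni i} \nu_I = 0$. Thus any balanced inequality valid for the Shannon entropies of the $X_I$ transfers immediately to the von Neumann entropies of the stabilizer reductions.

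For the first statement this is all that is needed: the $X_I$ are nothing but the marginals of the uniform distribution on the finite set $M^\perp$, hence a completely unrestricted family of discrete random variables. Every balanced information inequality that holds for arbitrary random variables---Shannon-type as well as the balanced non-Shannon-type inequalities of Zhang--Yeung and their descendants---therefore applies to the $X_I$, and via the cancellation above to $\rho(M)$.

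For the Ingleton inequality, which fails for general random variables, I would exploit the abelian-group structure of the classical model: $X$ is uniform on the subgroup $M^\perp$ of the abelian group $V$, and each marginal $X_I = \pi_I(X)$ is obtained via a group homomorphism. The $X_I$ therefore fit the group-theoretic framework of Chan--Yeung for random variables built from subgroups; since $V$ is abelian, all such subgroups are automatically normal, so the theorem of Li--Chong stating that Ingleton's inequality holds for normal-subgroup constructions applies. To close the argument one still has to verify that Ingleton~\eqref{ingleton}, read as a linear combination of entropies of unions of the four disjoint subsets $I, J, K, L$ of parties, is itself balanced in the sense above---a short direct count of the coefficients in its expansion into entropies $S(\rho_A)$. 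The main obstacle is therefore not technical but the clean invocation of the external group-theoretic result; the rest of the proof is driven entirely by the theorem and the balancedness cancellation.
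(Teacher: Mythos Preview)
Your proposal is correct and follows essentially the same route as the paper: you invoke the theorem's identity $S(\rho(M)_I)=H(X_I)-\abs I$, observe that balancedness kills the offset via $\sum_I \nu_I \abs I = \sum_i \sum_{I\ni i}\nu_I = 0$, and then handle Ingleton by placing the $X_I$ in the Chan--Yeung/Li--Chong group-theoretic framework, using that all subgroups of the abelian phase space are normal. The paper does exactly this, even down to noting that Ingleton is balanced so that the cancellation applies.
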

\begin{proof}
	As described in the introduction, the first claim follows immediately from
	\eqref{main eqn}. This is because for any balanced information inequality
	$\sum_I \nu_I H(X_I) \geq 0$ we necessarily have that \cite{shadbakhthassibi11}
	\begin{equation*}
		\sum_I \nu_I \abs I = \sum_I \left( \sum_{i \in I} \nu_I \right) = \sum_i \left( \sum_{I \ni i} \nu_I \right) = 0.
	\end{equation*}
	Hence the correction term in \eqref{main eqn} cancels as we sum over all
	subsystems:
	\begin{equation*}
		\sum_I \nu_I S(\rho_I) = \sum_I \nu_I H(X_I) - \sum_I \nu_I \abs I \geq 0.
	\end{equation*}

	For the second claim, we note that Ref.~\onlinecite{lichong07} shows that the Ingleton inequality
	\eqref{ingleton} holds for the random variables $X_I = \pi_I(X)$.
	In the language of Ref.~\onlinecite{chanyeung02}, this is because the entropy vector $(H(X_I))$ can be characterized by
	the normal subgroups $\ker(\pi_I) \cap M^\perp$
	(in fact, our phase spaces are even Abelian groups).
	Since the Ingleton inequality is balanced, the argument given above shows that it also holds for the von Neumann entropies of stabilizer states.
\end{proof}

Pure stabilizer states correspond to maximally isotropic submodules $M \subseteq V$.
Such submodules are called \emph{La\-grang\-ian}, and they satisfy $\abs M = d^n$ and $M = M^\perp$.
Thus in this case our classical model can also be defined by choosing $X \in M$ uniformly at random.
Furthermore, since $\pi_I(M) \cong M / (\ker \pi_I \cap M)$, we may also define $X_I$ to be the coset of $X$
modulo $\ker(\pi_I) \cap M = M \cap V_{I^c} = M_{I^c}$.
In this way, we recover the construction of Theorem 11 in Ref.~\onlinecite{linden_quantum_2013}.

\section{Gaussian States}
\label{gaussian states}

We sketch the corresponding result for Gaussian states of continuous-variable systems.
The Wigner function of an $n$-mode Gaussian quantum state $\rho$ with covariance matrix $\Sigma$ and first moments $\mu$ is defined as follows on classical phase space $\RR^{2n}$:
\begin{equation*}
	W_\rho(x) = \frac1{(2\pi)^{n} (\det\Sigma)^{\frac12}} e^{-\frac12 (x-\mu)^T \Sigma^{-1} (x-\mu)},
\end{equation*}
(see e.g.~the review Ref.~\onlinecite{weedbrockpirandolagarciapatronetal12}).
Evidently, $W_\rho$ is the probability density of a random vector $X = (X_1,\ldots,X_{2n})$ with multivariate normal distribution of mean $\mu$ and covariance matrix $\Sigma$. Using the well-known relation $\tr \rho^2 = (2\pi)^n \int W^2_\rho(x) dx$, it follows that the R\'{e}nyi-2 entropy of the quantum state, $S_2(\rho) = - \log \tr \rho^2$, is directly related to the differential R\'{e}nyi-2 entropy of the random variable $X$, $H_2(X) = - \log \int W^2_\rho(x)\ dx$:
\begin{equation}
\label{gaussian qc overall}
	S_2(\rho) = H_2(X) - n \log(2 \pi).
\end{equation}
The reduced state $\rho_I$ for some subset of modes $I \subseteq \{1,\ldots,n\}$ is again a Gaussian state, and its covariance matrix is equal to the corresponding submatrix of $\Sigma$. Thus the Wigner function of $\rho_I$ is given by the marginal probability density of the variables $X_I = (X_i)_{i\in I}$, and using \eqref{gaussian qc overall} we find that
\begin{equation}
\label{gaussian qc}
  S_2(\rho_I) = H_2(X_I) - \abs I \log(2 \pi).
\end{equation}
Equation \eqref{gaussian qc} states that the R\'{e}nyi-2 entropy of a Gaussian quantum state is always lower than the phase space entropy of its classical model, as given by the Wigner function. It is so by a precise amount, namely by $\log(2\pi)$ bits per mode.

\begin{theorem}[Classical model for Gaussian states]
	\label{gaussian theorem}
  Let $\rho$ be a Gaussian state with covariance matrix $\Sigma$, and
  define a random variable $X = (X_1,\ldots,X_n)$ with probability density
  given by the Wigner function $W_\rho(x)$. Then, for any positive $\alpha \neq 1$,
  \begin{equation*}
  	S_2(\rho_I) = H_\alpha(X_I) - \abs I \left( \log \pi -  \frac {\log \alpha} {1-\alpha} \right),
  \end{equation*}
  where $H_\alpha(X) = 1/(1-\alpha) \log \int W_\rho^\alpha(x)\ dx$ is the differential R\'{e}nyi-$\alpha$ entropy.
  In the limit $\alpha \rightarrow 1$, we recover
  \begin{equation}
  	\label{shannon gaussian}
	  S_2(\rho_I) = H(X_I) - \abs I \left( \log \pi + 1 \right).
  \end{equation}
  where $H(X) = - \int W_\rho(x) \log W_\rho(x)\ dx$ is the differential Shannon entropy.
\end{theorem}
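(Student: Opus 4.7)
The plan is a direct Gaussian integration, with cancellation of the state-dependent terms on the two sides. Because the Wigner function of $\rho_I$ is the marginal of $W_\rho$ over the phase-space coordinates of modes not in $I$, the random vector $X_I$ is multivariate normal with covariance matrix equal to the principal submatrix $\Sigma_I$ of $\Sigma$. It therefore suffices to prove the identity for the full system $I = \{1,\ldots,n\}$ and then apply it to each pair $(\rho_I, X_I)$, with $\abs{I}$ and $\Sigma_I$ playing the roles of $n$ and $\Sigma$.

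The core step is to evaluate the R\'{e}nyi-$\alpha$ entropy of a multivariate normal density on $\RR^{2n}$. Completing the square in the exponent of $W_\rho^\alpha$ reduces the integral to a standard Gaussian one, giving
\begin{equation*}
	\int W_\rho^\alpha(x)\,dx = (2\pi)^{n(1-\alpha)}\,\alpha^{-n}\,(\det\Sigma)^{(1-\alpha)/2},
\end{equation*}
whence
\begin{equation*}
	H_\alpha(X) = n \log (2\pi) - \frac{n \log \alpha}{1 - \alpha} + \tfrac{1}{2} \log \det \Sigma.
\end{equation*}
Specialising to $\alpha = 2$ and comparing with \eqref{gaussian qc overall} yields the explicit formula $S_2(\rho) = n \log 2 + \tfrac{1}{2}\log\det\Sigma$. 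Subtracting the two expressions cancels the state-dependent $\det\Sigma$ contribution and leaves precisely $S_2(\rho) = H_\alpha(X) - n\bigl(\log\pi - \log\alpha/(1-\alpha)\bigr)$, which together with the reduction argument above proves the first claim.

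For the Shannon limit $\alpha \to 1$, the expansion $\log\alpha = (\alpha-1) - (\alpha-1)^2/2 + O((\alpha-1)^3)$ gives $\log\alpha/(1-\alpha) \to -1$, so the additive constant approaches $\abs{I}(\log\pi + 1)$. One may either pass to the limit inside $H_\alpha$---which for a multivariate normal converges to the differential Shannon entropy---or substitute the well-known closed form $H(X) = \tfrac{1}{2}\log\det(2\pi e\,\Sigma)$ into the same subtraction to recover \eqref{shannon gaussian} directly. The main obstacle is not conceptual but notational: careful bookkeeping of the prefactors $2$, $\pi$, and $\alpha$ is needed to obtain exactly the stated constant, since any algebraic slip would contaminate it. No new tools beyond the elementary Gaussian integration above and the relation $\tr\rho^2 = (2\pi)^n \int W_\rho^2\,dx$ already invoked in \eqref{gaussian qc overall} are required.
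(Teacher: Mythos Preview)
Your proposal is correct and follows essentially the same route as the paper's proof: both compute the differential R\'enyi-$\alpha$ entropy of a multivariate Gaussian by direct integration to obtain $H_\alpha(X_I) = \tfrac12\log\det\Sigma_I + \abs{I}\bigl(\log 2\pi - \log\alpha/(1-\alpha)\bigr)$, and then combine this with the relation \eqref{gaussian qc} (equivalently \eqref{gaussian qc overall}) to cancel the $\det\Sigma$ term. Your write-up is simply more explicit about the intermediate steps (the raw Gaussian integral, the closed form for $S_2(\rho)$, and the $\alpha\to 1$ limit) than the paper, which compresses the argument into two sentences.
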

\begin{proof}
	By Gaussian integration, the differential R\'{e}nyi-$\alpha$ entropy of the random variable $X_I$ is given by
	\begin{equation*}
		H_\alpha(X_I) = \frac 1 2 \log \det \Sigma + \abs I \left( \log 2\pi - \frac {\log \alpha} {1-\alpha} \right).
	\end{equation*}
	The assertions of the theorem follow from this and \eqref{gaussian qc}.
\end{proof}

Equation \eqref{shannon gaussian} has been previously used in Ref.~\onlinecite{adessogirolamiserafini12}, where the formula is attributed to Stratonovich.
Just as in the discrete case, we immediately get the following corollary:

\begin{corollary}
	\label{gaussian corollary}
	The R\'{e}nyi-2 entropy for Gaussian states satisfies all balanced information inequalities that are valid for multivariate normal distributions.
\end{corollary}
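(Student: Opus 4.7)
The plan is to mimic the proof of Corollary \ref{stabilizer corollary} almost verbatim, with Theorem \ref{gaussian theorem} playing the role of the identity \eqref{main eqn}. First, I would fix a positive $\alpha \neq 1$ (or, equivalently, take the Shannon limit $\alpha \to 1$) and consider an arbitrary balanced information inequality $\sum_I \nu_I H_\alpha(X_I) \geq 0$ that is assumed to hold for every multivariate normal random vector of the relevant shape.

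Next, I would invoke Theorem \ref{gaussian theorem} for each subset $I$, which provides the key dictionary
\begin{equation*}
  S_2(\rho_I) = H_\alpha(X_I) - \abs I \left( \log \pi - \frac{\log \alpha}{1-\alpha} \right).
\end{equation*}
The correction term is linear in $\abs I$, so summing this identity against the coefficients $\nu_I$ and invoking the balancing condition $\sum_I \nu_I \abs I = 0$ (the identical algebraic computation already performed in the proof of Corollary \ref{stabilizer corollary}) kills the correction entirely and yields
\begin{equation*}
  \sum_I \nu_I S_2(\rho_I) = \sum_I \nu_I H_\alpha(X_I) \geq 0,
\end{equation*}
which is exactly the claimed inequality for the quantum Gaussian state.

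I do not anticipate any real obstacle: the substantive content of the corollary is entirely packaged inside Theorem \ref{gaussian theorem}, and what remains is a one-line algebraic cancellation using the definition of "balanced." The only subtlety worth flagging is that the hypothesis "valid for multivariate normal distributions" must be applied to the marginals $X_I$, which are themselves multivariate normal (of dimension $2\abs I$, with covariance equal to the corresponding submatrix of $\Sigma$), so the input assumption is directly usable for every term appearing in the sum. One may additionally note that the freedom to choose any $\alpha > 0$ with $\alpha \neq 1$ means a single family of balanced R\'{e}nyi-$\alpha$ inequalities for normal distributions already translates into the corresponding family of R\'{e}nyi-2 inequalities on the Gaussian quantum side.
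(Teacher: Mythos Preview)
Your proposal is correct and matches the paper's approach exactly: the paper presents this corollary as immediate from Theorem~\ref{gaussian theorem} (``Just as in the discrete case, we immediately get the following corollary''), relying on precisely the cancellation of the $\abs I$-proportional correction term under the balancing condition that you spell out. There is nothing to add.
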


Interestingly, Gaussian states can violate the Ingleton inequality (as opposed to stabilizer states, cf.~Corollary \ref{stabilizer corollary}).
Indeed, this is well-known for multivariate normal distributions, and it is readily verified that the counterexample presented in Ref.~\onlinecite{shadbakhthassibi11} is a physical covariance matrix (i.e., it satisfies the \emph{uncertainty relation} $\Sigma + i \Omega \geq 0$, where $\Omega$ is the symplectic matrix).
Thus, by Theorem~\ref{gaussian theorem}, the corresponding Gaussian state violates the Ingleton inequality.

\section{Acknowledgements}

We would like to thank Matthias Christandl for many fruitful
discussions.
This work has benefited from insightful comments of an
anonymous referee and the associate editor.
DG's research is supported by the Excellence Initiative of the German
Federal and State Governments (ZUK 43). MW acknowledges support of the
Swiss National Science Foundation (PP00P2--128455), the German Science
Foundation (CH 843/2--1), and the National Center of Competence in Research `Quantum Science and Technology'.

\appendix
\section{Phase Space Approach to Stabilizer States}
\label{appendix}

In this appendix, we present a self-contained account of Weyl
operators and stabilizer states in the discrete phase-space picture.
This section does not contain original results. All statements could
be found in some form in
Refs.~\onlinecite{gottesman96,appleby05,gross06,beaudrap13,kueng13}, albeit
not in a unified language.

{\bf Discrete symplectic geometry.}
Let $d > 1$ be an integer and let $\ZZ_d=\ZZ/d\ZZ$ be the
congruence classes of integers modulo $d$.
The \emph{phase space} for $n$ particles with local dimension $d$ is by definition $V = \bigoplus_{i=1}^n V_i = \ZZ_d^{2n}$, the free $\ZZ_d$-module of rank $2n$. Given a point $v \in V$, we write $v_i = (p_i,q_i) \in V_i = \ZZ_d^2$ for its components. Consider the bilinear form $[-,-] \colon V \times V \rightarrow \ZZ_d$ defined by
$$
	[v,v'] = \sum_{i=1}^n p_i q_i' - q_i p_i'.
$$
It is non-degenerate and totally isotropic, i.e.\ $[v,v] = 0$ for all $v \in V$.
If $d$ is a prime then the phase space $V$ is simply a symplectic vector space over the finite field $\FF_d = \ZZ_d$. We will also in the general case refer to $[-,-]$ as the \emph{symplectic form}.

A \emph{character} of a finite Abelian group $G$ is a homomorphism $G \rightarrow U(1)$. Denote by $\widehat G$ the set of characters, which is again an Abelian group with the operation of pointwise multiplication. It is called the \emph{(Pontryagin) dual} of $G$. It is well-known that $G \cong \hat G$, although not canonically. For the cyclic group $G = \ZZ_d$, all characters are powers of $\chi_d(x) = e^{\frac{2\pi \mathrm{i}}{d} x}$.

\begin{lemma}
	\label{phase space characters}
	The characters of the additive group of the phase space $V$ are
	$\hat V = \{ \chi_d([v,-]) : v \in V \} \cong V$.
\end{lemma}
\begin{proof}
	By injectivity of $\chi_d$ and non-degeneracy of the symplectic form, each $v$ determines a different character.
	Thus we have found all $\abs{\hat V} = \abs V$ many characters.
\end{proof}

The \emph{symplectic complement} of a submodule $M \subseteq V$ is the submodule $M^\perp = \{ v \in V : [v,m] = 0 \;\forall m \in M \}$.
In the case of prime $d$, it is well-known that $\dim M + \dim M^\perp = \dim V$---however, for general submodules the dimension (or rank) might not even be well-defined. Still there is an important analogue that holds in the general case:
\begin{lemma}
	\label{perp counting}
	$\abs M \abs{M^\perp} = \abs V$.
\end{lemma}
\begin{proof}
	We show that the group homomorphism
	\begin{equation*}
		\Phi \colon M^\perp \rightarrow \widehat{V/M},
		\quad
		v \mapsto \left( [w] \mapsto \chi_d([v,w]) \right).
	\end{equation*}
	is both injective and surjective (it is certainly well-defined).
	Injectivity follows immediately from the non-degeneracy of the symplectic form.
	For surjectivity, let $\tau \in \widehat{V/M}$. Then $w \mapsto \tau([w])$ is a character of $V$.
	By Lemma~\ref{phase space characters}, there exists $v \in V$ such that $\tau([w]) = \chi_d([v,w])$. Since $\tau$ vanishes on $M$, $v \in M^\perp$.
	Thus $\Phi$ is an isomorphism, and we find that
	\begin{equation*}
		\abs{M^\perp} = \abs{\widehat{V/M}} = \abs{V/M} = \frac {\abs V} {\abs M}.
		\qedhere
	\end{equation*}
\end{proof}
The following important corollary follows from Lemma~\ref{perp counting} and $M \subseteq (M^\perp)^\perp$.
\begin{corollary}
	\label{double perp}
  $(M^\perp)^\perp = M$.
\end{corollary}

We call a submodule $M \subseteq V$ an \emph{isotropic submodule} if $M \subseteq M^\perp$, i.e.~if $[m,m'] = 0$ for all $m, m' \in M$.

Finally, consider $V = \bigoplus_{i \in I} V_i$, the phase space of particles $I \subseteq \{ 1, \ldots, n\}$. There is a natural way of \emph{restricting} a submodule $M$ to $V_I$: we set
\begin{equation*}
 	M_I := M \cap V_I,
\end{equation*}
where $V_I$ is identified with a submodule of $V$ in the natural way.

\medskip

{\bf Weyl representation.}
Following Refs.~\onlinecite{appleby05,beaudrap13}, we first define
\emph{Weyl operators} for general integers $(P,Q) \in \ZZ^2$, not necessarily
in the range $\{0, \dots, d-1\}$. These are the unitaries
on $L^2(\ZZ_d) \cong \CC^d$ given by
\begin{equation*}
	(W({P,Q})\psi)(x) = \tau_{2d}(-P Q) \, \chi_d(P x) \, \psi(x - Q),
\end{equation*}
where $\tau_{2d}(R) = \chi_{2d}((d^2+1) R)$. 
For example, $W({1,0})$ is the $Z$-operator $\ket x \mapsto e^{\frac
{2\pi\mathrm{i}} d x} \ket x$, while $W({0,1})$ is the $X$-operator $\ket
x \mapsto \ket{x+1 \pmod d}$.
By direct computation \cite{beaudrap13},
\begin{eqnarray}
	&& W({P,Q})W({P',Q'})\nonumber\\
	&=& \tau_{2d}(P Q'-Q P') \, W({P+P',Q+Q'}),
	\label{Z heisenberg} \\
	\nonumber \\
	&& W({P,Q})^{-1} \nonumber\\
	&=& W({P,Q})^\dagger = W({-P,-Q}), \label{Z adjoint} \\
	\nonumber \\
	&&W({P,Q}) W({P',Q'}) \nonumber\\
	&=& \chi_d(P Q'- Q P') \, W({P',Q'}) W({P,Q}).
	\label{Z commutator}
\end{eqnarray}

We now introduce the Weyl operators $w(p,q)$ for congruence classes
$(p,q)\in\ZZ_d^2$.  It is here that the treatment of the odd and the
even-dimensional case diverges.

For $d$ odd, $\tau_{2d}(1) = \chi_d\big(\frac{d^2 + 1} 2\big)$ is a $d$-th root of unity, so that $W(P+d, Q) = W(P, Q+d) = W(P,Q)$.
In other words, $W$ is constant on congruence classes modulo $d$, so
we can directly define $w(p,q) := W(P,Q)$.
In fact, $2^{-1}:=\frac{d^2+1}2\in\ZZ$ is
the multiplicative inverse of $2$ modulo $d$, so that we recover the
formulas from Section~\ref{stabilizer states}:
\begin{align}
	\nonumber
	(w(p, q) \psi)(x) = \chi_d(p x - 2^{-1} p q) \, \psi(x - q)	\\
	w(v) w(v') = \chi_d(2^{-1} [v,v']) \, w(v+v').
	\label{odd heisenberg}
\end{align}

For $d$ even, $\tau_{2d}(1) = \chi_{2d}(1)$ is a primitive $2d$-th
root of unity (e.g., in the case of qubits $\tau_{2d}(1)=i$).
Equation~\eqref{Z heisenberg} then implies that
$W(P+d,Q)$ and $W(P,Q+d)$ are either $W(P,Q)$ or $-W(P,Q)$. In order
to fix the sign, we choose $w(p,q):=W(P,Q)$, where
$(P,Q)$ is the unique preimage in $\{0,\ldots,d-1\}^2 \subseteq
\ZZ^2$.
Because $w$ and $W$ differ at most by a phase, \eqref{Z heisenberg}
still implies that $(p,q)\mapsto w(p,q)$ defines a projective
representation of the (additive structure of the) phase space
$\ZZ_d^2$.

In both the even and the odd case, it now follows
from \eqref{Z adjoint} and \eqref{Z commutator}
that
\begin{align}
	\label{adjoint}
	w(v)^{-1} &= w(v)^\dagger = \pm w(-v), \\
	\label{commutator}
	w(v) w(v') &= \chi_d([v,v']) \, w(v') w(v).
\end{align}

For $n$-particles, the phase space is the direct sum $V = \bigoplus_{i=1}^n V_i = \ZZ_d^{2n}$. We define its Weyl representation on $(\CC^d)^{\otimes n}$ by the tensor product of the single-particle representations, $w(v) = \bigotimes_{i=1}^n w(v_i)$. In this way, the relations \eqref{adjoint} and \eqref{commutator} continue to hold. Moreover, it is easy to verify that
\begin{equation}
	\label{trace}
	\tr w(v) = d^n \, \delta_{v,0}.
\end{equation}

\medskip

{\bf Stabilizer states.}
To define stabilizer states, we start with a \emph{stabilizer group}
$G$, i.e.\ a finite Abelian group whose elements are multiplies of
Weyl operators on $(\CC^d)^{\otimes n}$, such that the only multiple
of $\Id=w(0)$ contained in $G$ is $\Id$ itself. With such a group we
associate the operator
\begin{equation*}
	P = \frac 1 {\abs G} \sum_{g \in G} g.
\end{equation*}
From the fact that $G$ is a group, we deduce that $P^2=P$;
since all elements of $G$ are unitaries, $P=P^\dagger$;
and \eqref{trace} implies that $\tr
P=d^n/|G|$. Hence $P$ projects onto a $\big(d^n/|G|\big)$-dimensional
subspace, called the \emph{stabilizer code} of $G$.
The corresponding \emph{stabilizer state} is $\rho = \frac 1 {d^n} \sum_g g$.
We now prove Theorem~\ref{stabilizers}, which we repeat for the reader's convenience:

\stabilizertheorem{stabilizers}{}{\tag{\ref{reductions stabilizers}}}{\tag{\ref{von neumann entropy stabilizers}}}
\begin{proof}
  \emph{1.\ From isotropic submodules to classes of stabilizer
	states:}
	Let $M \subseteq \ZZ_d^{2n}$ be an isotropic submodule.
	Since $M$ is a finite Abelian group, it can be written as a direct sum of cyclic groups,
	$M \cong \ZZ_{d_1} \oplus \ldots \oplus \ZZ_{d_k}$.
	Let $m_j \in M$ be a generator of the $j$-th cyclic subgroup.
	Since $w(m_j)^{d_j} \propto w(0) = \mathbf 1$, we can choose phases $\lambda_j$ such that $(\lambda_j w(m_j))^{d_j} = \mathbf 1$.
	Define
	\begin{equation*}
		G = \{ \underbrace{\prod_{j=1}^k (\lambda_j w(m_j))^{x_j}}_{=:
		\mu_m \, w(m)} : m = \sum_j x_j m_j \in M \}.
	\end{equation*}
	The product is well-defined, because by \eqref{commutator}, the Weyl
	operators $\{ w(m) : m \in M \}$ all commute. Thus the data $M, \mu$
	define a
	stabilizer group of cardinality $\abs M$ with
	corresponding stabilizer state
	$
		\rho(M, \mu) = \frac 1 {d^n} \sum_{m \in M} \mu_m w(m).
	$
	This state depends on the phases $\mu_m$, which in turn resulted from
	our choice of generators $m_j$ and phases $\lambda_j$. A different
	choice would have resulted in another stabilizer group $\{\nu_m
	w(m) \,|\, m\in M\}$ and
	we have yet to show that the two groups are related by conjugation
	with some Weyl operator.
	To this end, note that \eqref{trace} implies that necessarily $\nu_m w(m) \nu_{m'} w(m') = \nu_{m+m'} w(m+m')$.
	It follows that $\tau(m) := \nu_m / \mu_m$ defines a character of $M$.
	Since $V$ is an Abelian group, this character can be extended to all of $V$, and it is therefore by Lemma \ref{phase space characters} of the form
	$\tau(m) = \chi_d([v,m])$ for some $v \in V$. But then it follows from \eqref{commutator} that
	\begin{equation*}
		w(v) \, \mu_m w(m) \, w(v)^\dagger
		= \tau(m) \mu_m w(m)
		= \nu_m w(m).
	\end{equation*}
	Consequently, $\rho(M, \nu)$ and $\rho(M, \mu)$ are related by
	conjugation with the Weyl operator $w(v)$.

	If $d$ is odd, then \eqref{odd heisenberg} implies that $w(m) w(m')
	= w(m+m')$. It follows that $G := \{ w(m) : m \in M \}$ is a
	stabilizer group of cardinality $\abs M$, with corresponding
	stabilizer state
	$
		\rho(M) = \frac 1 {d^n} \sum_{m \in M} w(m).
	$
	This is the canonical	representative \eqref{stabilizer state} of
	the equivalence class of states associated with $M$.

	\emph{2. Surjectivity:}
	Here, we show that
	our map from isotropic submodules to equivalence classes of
	stabilizer states is surjective, Let $G$ be a stabilizer group with
	corresponding stabilizer state $\rho$.  Equation \eqref{trace}
	implies that, for each $g \in G$ there exists a unique $m_g \in V$
	such that $g \propto w(m_g)$.  Conversely, no two $m_g$ can be
	equal---
	otherwise, two group elements in $G$ would differ only by a phase
	and hence there would be a non-trivial multiple of $\Id$ in $G$.
	Define $M	:= \{ m_g \}$. Then $M$ is a submodule of $V$, since
	$m_g + m_h = m_{gh}$.
	Since $G$ is Abelian, all $w(m_g)$ commute and \eqref{commutator} shows that $M$ is isotropic.
	Then $M$ is indeed a preimage of $[\rho]$, since by its very construction there exists a choice of phases by which we recover $G$ (namely $\mu_{m_g} = g \, w(m_g)^{-1}$).

	\emph{3. Injectivity:}
	Suppose that $\rho(M,\mu)$ and $\rho(M',\mu')$ are two equivalent stabilizer states. As we saw, conjugating with a Weyl operator only changes the phases, so we may in fact assume that states are equal. Now assume that $M \neq M'$, so that there exists e.g.~$m \in M \setminus M'$. Then, \eqref{trace} shows that
	\begin{equation*}
		0 \neq \tr w(m) \rho(M,\mu) = \tr w(m) \rho(M',\mu') = 0,
	\end{equation*}
	which is the desired contradiction.

	\emph{4. Reduction:}
	We now show that our construction is compatible with reduction. For this, observe that
	\begin{equation*}
		\tr_{I^c} w(m) =
		\left( \prod_{i \in I} w(m_i) \right) d^{\abs I^c} \, \delta_{m \in M_I}.
	\end{equation*}
	Since any valid assignment of phases $\mu_m$ restricts to the submodule $M_I = M \cap V_I$,
	it follows that $[\rho(M)_I] = [\rho(M_I)]$.
	It is also immediate that the canonical element \eqref{stabilizer state} is compatible with reduction.

	\emph{5. Entropy:}
	In view of the last point, it suffices to show the statement about entropies for $I = \{1,\ldots,n\}$.
	Recall that the cardinality of $M$ and of the corresponding stabilizer groups $G$ agree.
	We have already seen that the dimension of the stabilizer code is equal to $d^n / \abs G$.
	Thus,
	\begin{equation*}
		S([\rho(M)]) = n - \log \, \abs G = n - \log \, \abs M. \qedhere
	\end{equation*}
\end{proof}

\end{document}